\newcommand{\E}{\mathbb{E}}
\newcommand{\ed}{\mathrm{d}}
\renewcommand{\P}{\mathbb{P}}
\newcolumntype{L}[1]{>{\raggedright\let\newline\\\arraybackslash\hspace{0pt}}p{#1}}
\DeclareMathOperator{\argmax}
\newcommand{\drift}{\mathrm{drift}}
\newcommand{\vol}{\mathrm{vol}}
\newtheorem{theorem}{Theorem}
\newtheorem{proposition}[theorem]{Proposition}
\theoremstyle{definition}
\numberwithin{equation}{section}
\numberwithin{theorem}{section}
\begin{document}

\author{\small{John Armstrong\textsuperscript{a}, James Dalby\textsuperscript{b}}\\
\small{\textsuperscript{a,b}Department of Mathematics, King's College London, UK}\\
\small{\textsuperscript{a,*}john.armstrong@kcl.ac.uk, \textsuperscript{b}james.dalby@kcl.ac.uk} }

\title{Optimal mutual insurance against systematic longevity risk}

\date{}              % No date for final submission

% Create title page with no page number

\maketitle

\begin{abstract}
    We mathematically demonstrate how and what it means for two collective pension funds to mutually insure one another against systematic longevity risk. The key equation that facilitates the exchange of insurance is a market clearing condition. This enables an insurance market to be established even if the two funds face the same mortality risk, so long as they have different risk preferences.
    %We demonstrate how two collective funds can protect against systematic longevity risk by exchanging mutual insurance contracts. 
    Provided the preferences of the two funds are not too dissimilar, insurance provides little benefit, implying the base scheme is effectively optimal. When preferences vary significantly, insurance can be beneficial. 
    %\textcolor{red}{Prediction: If the two funds face different mortality risks, insurance can still be traded and is beneficial even if preferences are similar.}
\end{abstract}
\noindent
\textbf{JEL classification}: C61, D81, G11, G22, J32
\newline
\textbf{Keywords}: Optimal mutual insurance, Collective pension funds, idiosyncratic and systematic longevity risk; Epstein--Zin preferences 

\renewcommand{\thefootnote}{}
\footnotetext{* Corresponding author. Strand,
London,
WC2R 2LS, UK. john.armstrong@kcl.ac.uk. +44 (0)20 78482855}

\section{Introduction}
In this paper, we calculate the optimal consumption, investment and insurance purchase strategies for a collective pension fund in a Black--Scholes market, subject to
both idiosyncratic and systematic longevity risk. We call the collective fund we study an ``Insured Drawdown Scheme'', which is characterised by a tontine structure/longevity credit system. This allows the fund to insure against idiosyncratic longevity risk. However, undiversifiable systematic longevity risk, is more difficult to insure against through a simple tontine. Therefore, additionally, a fund could trade mutual insurance contracts with another fund to protect against this risk. 

Armstrong et al. \cite{armstrong_buescu_dalby} studies the optimal consumption-investment problem in a Black-Scholes Market for a single insured drawdown fund under the effects of idiosyncratic and systematic longevity risk. 
The present paper builds on this by considering two insured drawdown funds with the same mortality risk and investment opportunities, along with the inclusion of a systematic mortality risk insurance market where the two funds can trade with each other. According to the theory of comparative advantage, when the two funds have different risk preferences, they value mortality risk differently and there is therefore an incentive for them to trade in this market for mutual benefits. To the best of our knowledge, two collective funds exchanging mutual insurance contracts on longevity risk via an internal market has not been considered in the literature.

We consider the decumulation stage of retirement and focus on tractable problems that yield equations on which analytical and numerical progress can be made. 
For this reason, throughout, we model the preferences of the funds using power utility, or more generally, Epstein--Zin preferences.
Besides their tractability (see \cite{campbellViceira} for a study of this), Epstein-Zin preferences have the advantage of separating risk-aversion from the diminishing returns of increased consumption at a moment in time. This has allowed the resolution of various asset pricing problems \cite{bansalYaron, bansal, benzoniEtAl, bhamraEtAl}. We therefore model our two collective funds as a continuous time stochastic optimal control problem where we seek to maximise Epstein-Zin utility with mortality. 

We work in continuous time so that the insurance market is complete. This ensures that the agents in our model are able to replicate any possible insurance contract. Thus our optimal investment strategies are optimal among all possible insurance contracts. This approach is similar in spirit to Cui and Ponds \cite{cui_ponds}, who consider a wage related swaps market. However, they work in discrete time which limits them to an incomplete market where a specific contract must be renegotiated each year and this is not necessarily optimal.

We derive the most general Hamilton-Jacobi-Bellman (HJB) equations which have three types of controls, the consumption rate, the investment quantity in a risky asset and the insurance purchase rate. In total, we have three equations to solve, two coupled three-dimensional partial differential equations (PDEs) resulting from the two HJB equations and an insurance market clearing condition. This a difficult task, so we focus on a simple sub-case of this most general problem, in which one of the funds is finite and the other infinite. This reduces the dimension of the PDEs from three to one, making them solvable, whilst still allowing us to quantify the maximal possible benefit insurance can provide. 

We study the finite-infinite fund problem under two different mortality models: (i) a stylised mortality model and (ii) a more realistic continuous time analogue of the Cairns--Blake--Dowd (CBD) mortality model \cite{CBD_article}. The stylised model introduces a time symmetry allowing us to reduce the dimension of the HJB equation and analytically determine the value function, optimal strategies and insurance price. This is beneficial as it provides a concrete example to highlight the novel features of operating co-dependent collective funds. 
The CBD model does not posses such a time symmetry, so the resulting HJB equation is solved numerically using the Crank-Nicholson scheme. This enables us to provide realistic values on the potential benefits of mutual insurance. For most combinations of preferences, the maximum benefit uplift experienced by a fund is typically less than 6\%. We interpret this to mean the additional complexity introduced from insurance is not a worthwhile exercise when the improvement in outcomes is small. In some scenarios insurance may be beneficial.

%This difference in preferences allows a market for the exchange of mutual insurance contracts on longevity to be introduced, which can further help to mitigate systematic longevity risk. 

Our work contributes to the well established literature on optimal investment. Problems related to our setup include the seminal Merton problem \cite{merton1969lifetime} and optimal investment problems under Epstein-Zin utility without mortality \cite{xing, optimal_EZ_finite, optimal_EZ_infinite}. It also adds to the growing literature on tontines. See \cite{milevsky} for a review of the history of tontines and the most recent literature. Specific works of interest include Milevsky and Salisbury \cite{milevsky2015}, who consider optimal investment for
a tontine invested in a bond only; Chen and Rach \cite{chenAndRach} consider a tontine with a minimum payout; Chen et al. \cite{chenRachSehner2020} consider a combination of tontines and annuities; while Boado-Penas et al. \cite{boadoPenas} consider a system that keeps funds within certain corridors or limits. 
%The latter describes a situation closely aligned to the current understanding of how UK multi-employer CDC might operate. 
More generally, this work enriches the discussion on the potential benefits of risk sharing in collective based pension schemes \cite{gollier2008, cui, branger}, but these approaches, based on central planning, require compulsory membership \cite{bovenberg2007}, while our insured drawdown scheme does not. %None of these works look at optimal investment with longevity insurance making this the novelty of our work.

In \Cref{sec:two_fund}, we introduce the defining equations for the optimal operation of two finite collective funds with access to insurance contracts on mortality. In \Cref{sec:finite_infinite}, we introduce the specific case of a finite and an infinite fund and study this under different mortality models. In \Cref{sec:conclusions}, some conclusions are presented.

\section{Mutual insurance for two funds}\label{sec:two_fund}
We begin by developing the mathematical framework for two collective funds to mutually insure one another. The dynamics of the problem are specified by a total of 6 equations. Indexing the two funds by $i=1,2$, we have two wealth processes $w_i$ and two equations for their dynamics, a mortality-rate process  $\lambda$ and its dynamics, a single risky asset of price $S$ and its dynamics, and an equation governing the size of each fund $n_i$. These are
\begin{subequations}
\begin{align}
    &\ed w_i = (\lambda w_i + r(w_i-q^a_i S)-q^c_i p-c_i+q^a_i\mu S +q^c_i \drift_{\lambda} )\ed t \nonumber\\
    &\quad\quad+ q^a_i\sigma S \ed W^1+q_i^c \vol_{\lambda} \ed W^2,\label{eq:wealth}\\
    &\ed S = \mu S \ed t + \sigma S \ed W^1,\label{eq:BSM}\\
    &\ed\lambda = \drift_{\lambda} \ed t + \vol_{\lambda} \ed W^2,\label{eq:mortality_sde}\\
    & \ed n_i = -\lambda n_i \ed t.
\end{align}
\end{subequations}
Each fund has three controls, $c_i$, the rate of consumption, $q_i^a$, the quantity of the risky asset purchased and $q_i^c$, the rate at which insurance contracts of price $p$, are purchased. These insurance contracts represent insurance against increases in $W^2$ and hence against increases in the mortality rate. Our market model is given by \eqref{eq:BSM} i.e.\ the Black--Scholes model with drift $\mu$, volatility $\sigma$ and $W^1$ is a Brownian motion. As the second term in equation \eqref{eq:wealth} indicates, we are also assuming
that their is a risk-free asset with constant interest rate $r$.
The force of mortality is given by \eqref{eq:mortality_sde}, where $W^2$ is a Brownian motion independent of $W^1$. We take the mortality rate to be the value of the payout of insurance contracts at each time. Our collective fund is characterised by a tontine/longevity credit system whereby the funds of deceased members are shared evenly among survivors, hence the $\lambda w_i$ term in \eqref{eq:wealth}. In the absence of insurance contracts the dynamics of each fund will evolve independently, hence $p$ is what couples the two funds and reflects information about the ratio of their size's and wealth's. We call the decoupled problem without insurance ``the one-fund problem".

It is important to note that by allowing investors to purchase this insurance product, combined with the fact that we are working in a continuous time model, we have created a complete market in systematic-longevity risk. Our model allows the parties to replicate arbitrary longevity-risk contracts and so there is no loss of generality in using this single contract.

In order to make the exchange of insurance fair, the price of insurance contracts should be set endogenously. This is done via the introduction of one more equation, a market clearing condition. This equation sets the price of systematic-longevity-risk contracts to be the price such that the internal insurance market clears when everyone behaves optimally, that is
\begin{equation}
    n_1 q^{c*}_1 + n_2 q^{c*}_2 = 0,\label{eq:clearing}
\end{equation}
where $q^{c*}_i$ denotes the optimal insurance contract purchase rate.
This is the defining equation for operating two collective funds with insurance fairly.

The market clearing condition \eqref{eq:clearing} allows us to determine the pricing measure for arbitrary systemic-longevity-risk contracts (see \eqref{eq:general_price}).  
%{\color{red}TODO say what this is. In intro say that our model considers all possible insurance contracts and finds a mutual optimum among this infinite-dimensional family of possibilities}.
By the Martingale representation theorem, any derivative contract can be replicated by trading in the continuous time insurance contract and its price will be determined by this measure.

We assume each fund seeks to maximize a value function given by homogeneous Epstein--Zin preferences with mortality. We work with these preferences as they yield analytically tractable problems. Epstein--Zin preferences are understood most easily in discrete time and we refer the reader to \cite{armstrong_buescu_dalby} for a discussion of this. Epstein--Zin preferences can also be defined in continuous time as the solution of a backwards stochastic differential equation: 
\begin{equation}
d V_t = -f(c_t, V_t, \lambda_t) \, \ed t + Z_t \ed W_t,
\label{eq:ezBSDE}
\end{equation}
where $f$ is the Epstein--Zin aggregator,
\begin{equation}
    f(c,V,\lambda):=\frac{1}{\rho}c^{\rho} \left( \alpha V \right)^{1-\frac{\rho}{\alpha}} - \left(\frac{\alpha}{\rho}\delta+\lambda\right) V \textrm{ for }i=1,2.\label{eq:general_EZ_aggregator2}
\end{equation}
$0\neq\alpha<1$ is a monetary risk aversion parameter, $0\neq\rho<1$ is a satiation parameter and $\delta>0$ is a discount factor. We set $\delta=0$, since we want to focus on the discounting effects of mortality. We note that in the case $\alpha=\rho$, the aggregator corresponds to von Neumann--Morgenstern utility (with mortality).
A heuristic derivation of continuous time Epstein--Zin preferences with mortality can be found in Appendix \ref{sec:ezAggregatorMotivation}. The argument follows the same steps as those for standard Epstein--Zin preferences with no mortality \cite{optimal_EZ_finite}.

Our objective is to find controls $(c^i,q^a_i,q^c_i)$, such that
\begin{multline}
V_i(t,S,w_1,w_2,n_1,n_2,\lambda)=\\
\underset{(c^i,q^a_i,q^c_i)\in\mathcal{A}(w_0)}{\sup}\mathbb{E}\left[\int_t^\infty  f_i(c_i,V_i,\lambda) ds\right],\textrm{ for }i=1,2,
\end{multline}
where $\mathcal{A}(w_0)$ is the set of strategies starting from initial wealth $w_0$, such that the wealth process remains non-negative for all times. Assuming each $V_i$ is smooth, an It\^o expansion yields the corresponding SDEs.
The martingale principal of optimal control (see \cite{Rogers}) then yields the HJB equation for each fund:
\begin{align}
    &0=\underset{(c^i,q^a_i,q^c_i)\in\mathcal{A}(w_0)}{\sup}\left[\frac{\partial V_i}{\partial t}+\right.\nonumber\\
    &\left.\frac{1}{2}\sum^2_{j=1}\frac{\partial^2 V_i}{\partial w_j^2} \left((q^a_j\sigma S)^2 + (q_j^c \vol_{\lambda})^2\right) + \frac{\partial^2 V_i}{\partial w_1\partial w_2}(q^a_1 q^a_2(\sigma S)^2 + q_1^c q_2^c (\vol_{\lambda})^2) + \nonumber\right.\\
    &\left.\sum^2_{j=1}\frac{\partial V_i}{\partial w_j} \left(\lambda w_j + r(w_j-q^a_j S)-q^c_j p-c_j+q^a_j\mu S +q^c_j \drift_{\lambda}\right)+\nonumber\right.\\
    & \left.\drift_{\lambda} \frac{\partial V_i}{\partial\lambda }+\frac{1}{2}(\vol_{\lambda})^2 \frac{\partial^2 V_i}{\partial\lambda^2 }+\sum^2_{j=1}\frac{\partial^2 V_i}{\partial\lambda \partial w_j}(\vol_{\lambda})^2q^a_j
   -\sum^2_{j=1}\frac{\partial V_i}{\partial n_j} \lambda  n_j +\nonumber\right.\\
   &\left.\sum^2_{j=1} \frac{\partial^2 V_i}{\partial S \partial w_j}q^a_j (\sigma S)^2 +\frac{\partial V_i}{\partial S}\mu  S +\frac{1}{2} \frac{\partial^2 V_i}{\partial S^2}(\sigma  S)^2 +f_i(c_i,V_i,\lambda)\right]%\nonumber\\
   %&:=\underset{(c^i,q^a_i,q^c_i)\in\mathcal{A}(w_0)}{\sup}\left[h_i(V_i,c_i,q^a_i,q^c_i)\right] 
   \text{ for $i=1,2$}.
   \label{eq:two_fund_hjb}
\end{align}
These two HJB equations are coupled.

Solving this set of equations is a difficult task. An analytical solution of this problem appears highly unlikely, whilst obtaining numerical solutions also faces various problems. %Firstly, when both the risky asset and insurance contracts are present, it is not possible to obtain an analytical expression for the optimal quantity of the risky asset and optimal purchase rate of insurance contracts. 
Equation \eqref{eq:two_fund_hjb} could be solved with policy iteration to compute the optimal strategies (see \cite{Policy_iteration}), but this is computationally expensive for a three-dimensional PDE. One may also try to exploit symmetries of the problem to simplify the equations. For example, we can ignore the fund sizes ($n_1,n_2$) as variables, since a change in them should have the same effect as a change in wealth for a fixed number of individuals. Furthermore, the ratio of the fund sizes will remain the same for all time since both funds obey the same mortality model. %If we then remove the risky asset (in which case $q^c_i$ can be determined analytically) and 
This suggests we express the equation in term of the wealth ratio $w_2/w_1$, so we obtain a two-dimensional problem. However, the resulting equations are still complex, highly non-linear and likely to run into numerical issues. 

Given these issues, and that we do not know a priori if much benefit can even be gained from exchanging mutual insurance contracts on mortality, we simplify the problem by determining the maximum possible benefit. This is achieved by considering the case where one of the funds is effectively infinite in terms of its size and wealth. This is therefore a limiting case of the setup described above i.e.\ $n_1,w_1\to\infty$, or $n_2,w_2\to\infty$. The infinite fund sets the price of the insurance contracts and matches all demand for these contracts from the small fund without altering its position. Hence, this situation yields the maximum possible benefit a fund can achieve from buying/selling insurance with another fund. By comparing our results to a single fund that does not purchase insurance (the one-fund problem), we can determine how much additional benefit insurance can give.

\section{A finite and an infinite fund}\label{sec:finite_infinite}

Without loss of generality, we assume the first fund indexed by $i=1$, is finite, and the second fund indexed by $i=2$, is infinite. In this situation we have $V_1(t,S,w_1,w_2,n_1,n_2,\lambda)=V_1(t,S,w_1,n_1,\lambda)$, since $w_2$ and $n_2$ will not change and $V_2(t,S,w_1,w_2,n_1,n_2,\lambda)=V_2(t,S,w_2,n_2,\lambda)$, since fund one will not influence fund two. 

Our problem has three symmetries. Firstly, at any time $t$, the Black-Scholes market with initial stock price $S_t$, is equivalent to the same market with any other possible initial stock price. As a result, one expects that the optimal investment strategy will be independent of the stock price $S_t$. Secondly, $V$ is positively homogeneous of order $\alpha$ i.e.\ 
$V(\zeta c,\lambda)=\zeta^\alpha V(c, \lambda)$. Since we are in the Black-Scholes market, one expects that the value function will also be positively homogeneous of order $\alpha$ in the wealth. Third, a change in the fund size will be the same as a change in wealth for a fixed number of individuals.
This motivates an ansatz for the HJB equation.

\begin{proposition}
If one substitutes
\begin{equation}
V_1(t,\lambda,w_1)=w_1^{\alpha_1} g_1(\lambda,t), \text{ and } V_2(t,\lambda,w_2)=w_2^{\alpha_2} g_2(\lambda,t),\label{eq:two_fund_ansatz}
\end{equation}
into the HJB equation \eqref{eq:two_fund_hjb}, for the case of a finite and infinite fund, %which behave optimally such that the market clearing condition \eqref{eq:clearing} holds, 
one obtains the partial differential equation 
\begin{align}
    &(\alpha_1-1) \left(2 \drift_\lambda \frac{\partial g_1}{\partial\lambda}+2 \frac{\partial g_1}{\partial t}+\vol_\lambda^2 \frac{\partial^2 g_1}{\partial\lambda^2}\right)+\frac{2 \alpha_1 \vol_\lambda^2 }{g_2}\frac{\partial g_2}{\partial\lambda} \frac{\partial g_1}{\partial\lambda}\nonumber\\
    &-\frac{\alpha_1 \vol_\lambda^2
   }{g_1}\left(\frac{\partial g_1}{\partial\lambda}\right)^2\nonumber\\
   &+g_1 \left(\frac{2 (\alpha_1-1)
   \left(\alpha_1 \left(\rho_1 (r+\lambda) +(1-\rho_1) (\alpha_1 g_1)^{\frac{\rho_1}{\alpha_1(\rho_1-1)}}\right)-\lambda  \rho_1 \right)}{\rho_1}\right.\nonumber\\
   &\left.-\frac{\alpha_1 \vol_\lambda^2
   }{g_2^2}\left(\frac{\partial g_2}{\partial\lambda}\right)^2-\frac{\alpha_1 (r-\mu )^2}{\sigma ^2}\right)=0,\label{eq:inf_fund_optimal_hjb}
\end{align}
for the finite fund and
\begin{align}
    &\alpha_2 g_2\Bigg(  -\frac{\mu ^2}{2(\alpha_2 -1) \sigma ^2} + (\alpha_2  g_2)^{\frac{\rho_2 }{(\rho_2-1)\alpha_2 }}\left(\frac{1}{\rho_2 }-1\right)+\nonumber\\
   & \lambda  \left(1 -\frac{1}{\alpha_2}\right)-\frac{r^2}{2(\alpha_2 -1) \sigma ^2}+ r
   \left(\frac{\mu }{\left(\alpha_2 -1\right) \sigma ^2}+1\right)\Bigg)+\nonumber\\
    &\frac{\partial g_2}{\partial t} + \drift_\lambda \frac{\partial g_2}{\partial \lambda}+\frac{1}{2} \vol^2_\lambda \frac{\partial^2 g_2}{\partial \lambda^2}=0\label{eq:g_eqtn},
\end{align}
for the infinite fund.
The optimal consumption rate, investment quantity and insurance purchase rate are:
\begin{subequations}\label{eq:finite_strats}
\begin{align}
    & c_1^* = w_1 (\alpha_1 g_1)^{\frac{\rho_1}{\alpha_1(\rho_1-1)}}, \\
    & q_1^{a*} = \frac{w_1 (r-\mu )}{(\alpha_1-1) \sigma ^2 X},\\
    & q_1^{c*} = \frac{w_1 \left(\frac{\partial g_2}{\partial\lambda} g_1-g_2 \frac{\partial g_1}{\partial \lambda}\right)}{(\alpha_1-1) g_2 g_1}.\label{eq:insurance_strat}
\end{align}
\end{subequations}
Note, $q_1^{a*}$ is the investment strategy from the classical Merton problem. The price the infinite fund charges the small fund for insurance is
\begin{equation}
    p = \drift_\lambda + \frac{\vol^2_\lambda}{g_2} \frac{\partial g_2}{\partial \lambda}.\label{eq:inf_price}
\end{equation}.

\end{proposition}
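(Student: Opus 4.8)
The statement is a (somewhat long) computation, so the plan is to substitute the separable ansatz \eqref{eq:two_fund_ansatz} into the two HJB equations to which \eqref{eq:two_fund_hjb} reduces in the finite--infinite configuration, carry out the interior maximisation over the three controls of each fund, solve for the price from the market-clearing condition \eqref{eq:clearing}, and simplify. First I would record the structural simplifications: since the infinite fund's wealth and size do not change, $V_1 = V_1(t,\lambda,w_1)$ and $V_2 = V_2(t,\lambda,w_2)$, and the residual $n_i$-dependence is removed by the third symmetry (rescaling the fund size acts like rescaling wealth). Consequently every term in fund $1$'s HJB carrying $\partial/\partial w_2$ or $\partial/\partial n_2$ --- in particular the coupling term $\partial^2 V_1/\partial w_1\partial w_2$ and all of fund $2$'s controls --- drops out, so the two funds are coupled only through the price $p$, which will turn out to depend on $g_2$. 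By market clearing with $n_2\to\infty$ the infinite fund's per-capita insurance rate vanishes, $q_2^{c*}\to 0$, so fund $2$'s HJB collapses to a one-fund HJB with insurance merely available at price $p$. Inserting $V_i = w_i^{\alpha_i}g_i$, one has $\partial_t V_i = w_i^{\alpha_i}\partial_t g_i$, $\partial_\lambda V_i = w_i^{\alpha_i}\partial_\lambda g_i$, $\partial_{w_i}V_i = \alpha_i w_i^{\alpha_i-1}g_i$, $\partial^2_{w_i}V_i = \alpha_i(\alpha_i-1)w_i^{\alpha_i-2}g_i$, $\partial^2_{\lambda w_i}V_i = \alpha_i w_i^{\alpha_i-1}\partial_\lambda g_i$, while all $S$-derivatives vanish --- consistent with the first symmetry, since the only residual $S$-dependence enters through $q_i^{a*}$, which scales as $1/S$.

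Next comes the interior optimisation. The bracket in \eqref{eq:two_fund_hjb} is concave in $(c_i,q_i^a,q_i^c)$: concave in $c_i$ because $f_i$ is (as $\rho_i<1$), and, in $(q_i^a,q_i^c)$, a quadratic form with diagonal Hessian a scalar multiple of $\partial^2_{w_i}V_i$ --- there is no $q^a q^c$ cross term because $W^1\perp W^2$ --- negative definite under the sign conventions on $\alpha_i$ and $g_i$; hence the first-order conditions characterise the maximiser. The FOC in $c_1$, after cancelling powers of $w_1$, gives $c_1^* = w_1(\alpha_1 g_1)^{\rho_1/(\alpha_1(\rho_1-1))}$; the FOC in $q_1^a$ is the classical Merton condition and yields $q_1^{a*}$ of \eqref{eq:finite_strats}; the FOC in $q_1^c$ collects the $\lambda$--$w_1$ covariation term $\partial^2_{\lambda w_1}V_1\,\vol_\lambda^2 q_1^c$ together with $\partial^2_{w_1}V_1\,\vol_\lambda^2 q_1^c$ and $\partial_{w_1}V_1(\drift_\lambda - p)$, and gives $q_1^{c*}$ of \eqref{eq:insurance_strat} once $p$ is inserted. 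The price is pinned down by fund $2$: for $q_2^{c*}=0$ to be optimal there, its FOC in $q_2^c$ must hold at $q_2^c=0$, i.e.\ $\partial_{w_2}V_2(\drift_\lambda - p) + \partial^2_{\lambda w_2}V_2\,\vol_\lambda^2 = 0$, which under the ansatz is exactly \eqref{eq:inf_price}; the FOCs in $c_2$ and $q_2^a$ are the usual power-utility and Merton ones.

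Finally one substitutes the optimal controls back. For the infinite fund this is immediate: the factor $w_2^{\alpha_2}$ cancels and \eqref{eq:g_eqtn} remains. For the finite fund, substituting $c_1^*$, $q_1^{a*}$, $q_1^{c*}$ and the price \eqref{eq:inf_price} into the reduced HJB: the $q_1^a$-terms contribute the Merton constant $-\alpha_1(r-\mu)^2/\sigma^2$ multiplying $g_1$; completing the square in the $q_1^c$-terms and then substituting $p$ produces the $g_2$-dependent pieces $2\alpha_1\vol_\lambda^2 g_2^{-1}\partial_\lambda g_2\,\partial_\lambda g_1$ and $-\alpha_1\vol_\lambda^2 g_2^{-2}(\partial_\lambda g_2)^2 g_1$, alongside the $-\alpha_1\vol_\lambda^2 g_1^{-1}(\partial_\lambda g_1)^2$ term; collecting the remaining $\partial_t g_1$, $\partial_\lambda g_1$, $\partial^2_\lambda g_1$ and $f_1$ contributions and clearing the common power of $w_1$ yields \eqref{eq:inf_fund_optimal_hjb}. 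I expect the main obstacle to be exactly this last step --- bookkeeping the completed square for $q_1^c$ and the cross terms in $g_2$ so that the $g_2$-contributions land in the stated form --- together with the mildly heuristic $n_2\to\infty$ limiting argument needed to justify that the infinite fund neither moves its position nor departs from the one-fund optimum.
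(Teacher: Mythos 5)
Your proposal is correct and reaches all the claimed formulas, but you determine the price $p$ by a genuinely more direct route than the paper. The paper derives the general market-clearing price for two finite funds (displayed in the proof as \eqref{eq:general_price}), then specializes to equal preferences $\alpha_1=\alpha_2$, $\rho_1=\rho_2$ and decoupled value functions $V_i=w_i^\alpha g$ to read off the ``no-trade price'' \eqref{eq:no_trade_price}, and finally substitutes the infinite fund's $g_2$ in place of $g$. You instead observe that in the limit $n_2\to\infty$ the clearing condition $n_1 q_1^{c*}+n_2 q_2^{c*}=0$ directly forces $q_2^{c*}=0$, and then the first-order condition in $q_2^c$ evaluated at zero, namely $\partial_{w_2}V_2\,(\drift_\lambda-p)+\partial^2_{\lambda w_2}V_2\,\vol_\lambda^2=0$, immediately yields \eqref{eq:inf_price} under the ansatz. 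Your argument is cleaner for this specific proposition because it never requires the intermediate two-finite-fund price formula or the detour through the equal-preference case; the paper's route has the side benefit of producing the general pricing measure for two finite funds. One minor bonus in your write-up is the explicit concavity justification for why the first-order conditions characterise the maximiser (the $q^a$--$q^c$ Hessian is diagonal because $W^1\perp W^2$ and proportional to $\partial^2_{w_i}V_i$), which the paper leaves implicit; otherwise the back-substitution and bookkeeping you outline for \eqref{eq:inf_fund_optimal_hjb} and \eqref{eq:g_eqtn} matches the paper's computation.
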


\begin{proof}
The key step is determining the price $p$, that the infinite fund charges. The correct price is the one for which the infinite fund does not wish to actively trade. This can be determined by considering the case when fund one and two have the same preferences (i.e.\ $\alpha_1=\alpha_2, \rho_1=\rho_2$). In this scenario there is no reason for either fund to trade and the no trade price can be found. By substituting the infinite fund's preferences into this no trade price, we obtain its no trade price.

Differentiating \eqref{eq:two_fund_hjb} with respect to $q^c_1$ and $q^c_2$, %(when there is no risky asset), 
setting the results equal to zero and solving for $q^c_1$ and $q^c_2$, we can find the optimal rates.
The market clearing condition \eqref{eq:clearing} can then be solved for the price $p$, which is 
\begin{align}
    p = &\left[\drift_\lambda \frac{\partial V_2}{\partial w_2} \left(\frac{\partial^2 V_1}{\partial w^2_1}-\frac{n_2}{n_1} \frac{\partial^2 V_1}{\partial w_1\partial w_2}\right)+\frac{n_2}{n_1} \drift_\lambda  \frac{\partial V_1}{\partial w_1}\frac{\partial^2 V_2}{\partial w^2_2}-\right.\nonumber\\
    &\left.\frac{n_2}{n_1}
   \vol^2_\lambda \frac{\partial^2 V_1}{\partial w_1\partial w_2} \frac{\partial^2 V_2}{\partial \lambda\partial w_2}+\frac{n_2}{n_1} \vol^2_\lambda \frac{\partial^2 V_2}{\partial w^2_2} \frac{\partial^2 V_1}{\partial \lambda\partial w_1}-\drift_\lambda \frac{\partial V_1}{\partial w_1}
   \frac{\partial^2 V_1}{\partial w_1\partial w_2}\right.\nonumber\\
   &+\left.\vol^2_\lambda \frac{\partial^2 V_1}{\partial w^2_1} \frac{\partial^2 V_2}{\partial \lambda\partial w_2}-\vol^2_\lambda \frac{\partial^2 V_2}{\partial w_1\partial w_2} \frac{\partial^2 V_1}{\partial \lambda\partial w_1}\right]\Bigg/\nonumber \\
   &\left[\frac{\partial V_2}{\partial w_2}
   \left(\frac{\partial^2 V_1}{\partial w^2_1}-\frac{n_2}{n_1} \frac{\partial^2 V_1}{\partial w_1\partial w_2}\right)+\frac{\partial V_2}{\partial w_2} \left(\frac{n_2}{n_1} \frac{\partial^2 V_2}{\partial w^2_2}-\frac{\partial^2 V_2}{\partial w_1\partial w_2}\right)\right].
    \label{eq:general_price}
\end{align}
%We omit the optimal quantities and price at this point since they are cumbersome and uninformative until we make use of our ansatz.
When preferences are equal, the problem for the two funds decouple (since there is no trade), meaning the value functions should be the same up to some scaling determined by wealth, that is
\begin{equation}
    V_1=w_1^\alpha g(\lambda,t),\quad V_2=w_2^\alpha g(\lambda,t).
\end{equation}
Substituting these ans\"atze into the price found using the preceding steps, gives the following no-trade price
\begin{equation}
    p = \drift_\lambda + \frac{\vol^2_\lambda}{g} \frac{\partial g}{\partial \lambda}.\label{eq:no_trade_price}
\end{equation}
 Replacing $g$ in \eqref{eq:no_trade_price}, with the solution of the one-fund problem for the infinite fund, we obtain \eqref{eq:inf_price}.

 %Using \eqref{eq:two_fund_ansatz} in \eqref{eq:two_fund_hjb}, produces the simplified HJB equation for fund 1
%\begin{align}
%    &0=\underset{(c^i,q^a_i,q^c_i)\in\mathcal{A}(w_0)}{\sup}\left[\frac{\partial V_1}{\partial t}+\frac{1}{2} \frac{\partial^2 V_1}{\partial w_1^2} \left((q^a_1\sigma X)^2 + (q_1^c vol_{\lambda})^2\right)^{\frac{1}{2}}+\nonumber\right.\\
%    &\left.\frac{\partial V_1}{\partial w_1} \left(\lambda w_1 + r(w_1-q^a_1 X-q^c_1 p)-c^1+q^a_1\mu X +q^c_1 drift_{\lambda}\right)+\nonumber\right.\\
%    & \left.drift_{\lambda} \frac{\partial V_1}{\partial\lambda }+\frac{1}{2}vol_{\lambda} \frac{\partial^2 V_1}{\partial\lambda^2 }+\frac{\partial^2 V_1}{\partial\lambda \partial w_1}vol_{\lambda}q^a_1
%    +\nonumber\right.\\
%    &\left.+f_1(c_1,V_1,\lambda)\right]\nonumber\\
%   &:=\underset{(c^1,q^a_1,q^c_1)\in\mathcal{A}(w_0)}{\sup}\left[h_1(V_1,c_1,q^a_1,q^c_1) \right].\label{eq:inf_fund}
%\end{align}

%Substituting this price into the wealth equation \eqref{eq:wealth} for $i=1$, we can optimise the finite fund subject to this insurance price. Because this is the no trade price for the infinite fund ($i=2$), the two problems are decoupled. Hence, the summations in \eqref{eq:two_fund_hjb} and the mixed partial derivative with respect to $w_1,w_2$ are lost, and we have two decoupled HJB equations for $i=1,2$ subject to the price \eqref{eq:inf_price}.
%In this case, with $i=1$, substituting our ansatz \eqref{eq:two_fund_ansatz} %into \eqref{eq:two_fund_hjb} 
We can now optimise the finite fund subject to this insurance price.
Substituting our ansatz \eqref{eq:two_fund_ansatz} and the insurance price \eqref{eq:no_trade_price} into (2.6) with $i=1$,
and differentiating with respect to $c_1,q_1^a,q_1^c$, we find the optimal strategies in \eqref{eq:finite_strats}. Substitution of these strategies into the HJB equation, yields the PDE \eqref{eq:inf_fund_optimal_hjb}. We can repeat the same steps for the infinite fund to get the PDE \eqref{eq:g_eqtn}. The strategies the infinite fund follows are unimportant (since we wish to study the finite fund) beyond the fact $q_2^{c*}=0$, when the price is given by \eqref{eq:inf_price}.

\end{proof}

\subsection{A stylised mortality model}
We now solve \eqref{eq:inf_fund_optimal_hjb} analytically through a special choice of mortality model. We introduce the stylised model considered in \cite{armstrong_buescu_dalby}, namely
\begin{equation}
    \ed \lambda= a \lambda^2 \, \ed t + b \lambda^{\frac{3}{2}} \, \ed W,\label{eq:stylised_mortality_SDE}
\end{equation}
for some constants $a$ and $b$. This model does not match human mortality data particularly well, but it does have similar qualitative properties: $\lambda_t$ will always be positive, it will explode to $+\infty$ in a finite time, ensuring that all members of the funds die in finite time and,
ignoring short-term fluctuations, mortality rates increase with age.
The advantage of this model is that \eqref{eq:stylised_mortality_SDE} is time scale-invariant, thereby introducing a symmetry in time and removing it as a variable in the case $r=\mu=0$. 
This suggests solutions should take the form
\begin{equation}
    g_1(\lambda) = B_1 \lambda^\xi,\text{ and }g_2(\lambda) = B_2 \lambda^\xi.\label{eq:style_ansatz}
\end{equation}

\begin{theorem}
The HJB equation for a finite and infinite fund, under the mortality model \eqref{eq:stylised_mortality_SDE}, with $r=\mu=0$, has trivial solutions and the analytical solutions 
\begin{subequations}
    \begin{align}
    &V_1= w^{\alpha_1} \frac{(A(a,b, \alpha_1,\alpha_2,\rho_1,\rho_2)\lambda) ^{\frac{\alpha_1 (\rho_1-1)}{\rho_1}}}{\alpha_1},\label{eq:two_fund_analytical}\\
    & 
    V_2=w_2^{\alpha_2} \frac{ \left(\left(\frac{(\alpha_2 -1) \rho_2 }{\alpha_2  (\rho_2 -1)}+a+\frac{b^2 (\alpha_2  (\rho_2 -1)-\rho_2 )}{2 \rho_2
   }\right)\lambda\right)^{\frac{\alpha_2  (\rho_2 -1)}{\rho_2 }}}{\alpha_2 }\label{eq:analytical},
\end{align} 
\end{subequations}
where
\begin{multline}
    A(a,b, \alpha_1,\alpha_2,\rho_1,\rho_2) = \\
    \frac{a (\rho_1-1)+\frac{(\alpha_1-1) \rho_1}{\alpha_1}-\frac{b^2 \left(\alpha_2^2 \rho_1^2 (\rho_2-1)^2-2 \alpha_1 \alpha_2 (\rho_1-1) \rho_1 \rho_2 (\rho_2-1)+(\rho_1-1) \rho_2^2 (\alpha_1 (2 \rho_1-1)-\rho_1)\right)}{2(\alpha_1-1) \rho_1 \rho_2^2}}{\rho_1-1},
    %\frac{1}{\alpha_1}\times\\
    %\left(\frac{a (\rho_1-1)+\frac{(\alpha_1-1) \rho_1}{\alpha_1}-\frac{b^2 \left(\alpha_2^2 \rho_1^2 (\rho_2-1)^2-2 \alpha_1 \alpha_2 (\rho_1-1) \rho_1 \rho_2 (\rho_2-1)+(\rho_1-1) \rho_2^2 (\alpha_1 (2 \rho_1-1)-\rho_1)\right)}{2(\alpha_1-1) \rho_1 \rho_2^2}}{\rho_1-1}\right)^{\frac{\alpha_1 (\rho_1-1)}{\rho_1}},
    %\frac{2^{\alpha_1 \left(\frac{1}{\rho_1}-1\right)}}{\alpha_1}\times\\
    %\left(\frac{ 2\left(a (\rho_1-1)+\frac{(\alpha_1-1) \rho_1}{\alpha_1}\right)-\frac{b^2 \left(\alpha_2^2 \rho_1^2 (\rho_2-1)^2-2 \alpha_1 \alpha_2 (\rho_1-1) \rho_1 \rho_2 (\rho_2-1)+(\rho_1-1) \rho_2^2 (\alpha_1 (2 \rho_1-1)-\rho_1)\right)}{(\alpha_1-1) \rho_1 \rho_2^2}}{\rho_1-1}\right)^{\frac{\alpha_1 (\rho_1-1)}{\rho_1}},
\end{multline}
as long as $A$ and the bracketed term in \eqref{eq:analytical} are positive. 
\end{theorem}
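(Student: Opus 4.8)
The plan is to exploit the time-scale invariance of \eqref{eq:stylised_mortality_SDE}: with $\drift_\lambda=a\lambda^2$, $\vol_\lambda=b\lambda^{3/2}$ and $r=\mu=0$, the operator $\drift_\lambda\partial_\lambda+\tfrac12\vol_\lambda^2\partial_\lambda^2$ sends $\lambda^\xi$ to a constant multiple of $\lambda^{\xi+1}$ and the time variable disappears, which is exactly what makes the power ansatz \eqref{eq:style_ansatz} plausible. I would first solve the infinite-fund equation \eqref{eq:g_eqtn}, which under $r=\mu=0$ loses its $r$- and $\mu$-terms. Substituting $g_2=B_2\lambda^{\xi_2}$, every summand becomes a monomial in $\lambda$; all are proportional to $\lambda^{\xi_2+1}$ except the aggregator term $\alpha_2 g_2(\alpha_2 g_2)^{\rho_2/((\rho_2-1)\alpha_2)}$, whose power is $\xi_2\bigl(1+\tfrac{\rho_2}{(\rho_2-1)\alpha_2}\bigr)$. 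Matching powers forces $\xi_2=\tfrac{\alpha_2(\rho_2-1)}{\rho_2}$, the only value for which the ansatz is consistent rather than over-determined (the choice $\xi_2=0$ collapsing to $g_2\equiv 0$). With this $\xi_2$ one has $\tfrac{\rho_2}{(\rho_2-1)\alpha_2}=1/\xi_2$, so setting $C:=(\alpha_2B_2)^{1/\xi_2}$ makes $(\alpha_2 g_2)^{\rho_2/((\rho_2-1)\alpha_2)}=C\lambda$ and the coefficient of $\lambda^{\xi_2+1}$ becomes linear in $C$; solving it and simplifying with $\xi_2=\alpha_2(\rho_2-1)/\rho_2$ reproduces the bracketed constant in \eqref{eq:analytical}, since $g_2=(C\lambda)^{\xi_2}/\alpha_2$ and $V_2=w_2^{\alpha_2}g_2$.

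Next I would substitute into the finite-fund equation \eqref{eq:inf_fund_optimal_hjb}. The key observation is that $g_2$ enters there only through $\tfrac1{g_2}\partial_\lambda g_2$ and $\tfrac1{g_2^2}(\partial_\lambda g_2)^2$, which for $g_2=B_2\lambda^{\xi_2}$ equal $\xi_2/\lambda$ and $\xi_2^2/\lambda^2$; hence $B_2$ drops out entirely and only the value $\xi_2=\alpha_2(\rho_2-1)/\rho_2$ found above is needed. Putting in $g_1=B_1\lambda^{\xi_1}$ together with these expressions, and using $r=\mu=0$ to kill the $(r-\mu)^2/\sigma^2$ term, every summand is again a monomial, and matching the power of the aggregator term $g_1(\alpha_1 g_1)^{\rho_1/(\alpha_1(\rho_1-1))}$ to the common power $\lambda^{\xi_1+1}$ forces $\xi_1=\tfrac{\alpha_1(\rho_1-1)}{\rho_1}$. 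Then $\tfrac{\rho_1}{\alpha_1(\rho_1-1)}=1/\xi_1$, so with $A:=(\alpha_1B_1)^{1/\xi_1}$ the coefficient equation becomes linear in $A$; its coefficient of $A$ equals $\tfrac{2(\alpha_1-1)\alpha_1(1-\rho_1)}{\rho_1}$, which is nonzero given $\alpha_1\notin\{0,1\}$ and $\rho_1\notin\{0,1\}$, so $A$ is uniquely determined. Collecting the $a$-contribution, the constant contribution, and the several $b^2$-contributions (from $\vol_\lambda^2\partial_\lambda^2 g_1$, the cross term $\tfrac{2\alpha_1\vol_\lambda^2}{g_2}\partial_\lambda g_2\,\partial_\lambda g_1$, the term $\tfrac{\alpha_1\vol_\lambda^2}{g_1}(\partial_\lambda g_1)^2$, and the term $\tfrac{\alpha_1\vol_\lambda^2}{g_2^2}(\partial_\lambda g_2)^2$) and simplifying gives the displayed closed form for $A$. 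Reassembling $V_1=w_1^{\alpha_1}g_1=w_1^{\alpha_1}(A\lambda)^{\alpha_1(\rho_1-1)/\rho_1}/\alpha_1$ yields \eqref{eq:two_fund_analytical}, and the optimal controls are then read off from \eqref{eq:finite_strats}. The trivial solutions are $V_1\equiv0$ and $V_2\equiv0$, obtained by taking $B_1=0$ or $B_2=0$.

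I expect the main obstacle to be the computational bookkeeping in the finite-fund step: one must verify that after imposing $\xi_1=\alpha_1(\rho_1-1)/\rho_1$ \emph{every} term of \eqref{eq:inf_fund_optimal_hjb} really carries the single power $\lambda^{\xi_1+1}$, and then correctly assemble the four $b^2$-pieces and the $\rho_1$-dependent constant terms into the somewhat intricate grouping in the formula for $A$; an arithmetic slip there is easy to make and hard to spot. A secondary point needing care is the treatment of the fractional powers of $\alpha_ig_i$ when $\alpha_i<0$: the hypotheses ``$A>0$'' and ``the bracketed term in \eqref{eq:analytical} is positive'' are precisely what make $\lambda\mapsto(\,\cdot\,\lambda)^{\alpha_i(\rho_i-1)/\rho_i}$ a well-defined, strictly positive, smooth function on $(0,\infty)$, so that \eqref{eq:two_fund_analytical}--\eqref{eq:analytical} are genuine classical solutions of the HJB system with the controls \eqref{eq:finite_strats} well-defined.
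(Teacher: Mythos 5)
Your proposal is correct and takes essentially the same route the paper takes — the paper simply states that ``the proof is a calculation using the ansatz \eqref{eq:style_ansatz}'' and you carry out that calculation: substitute power-law ans\"atze into \eqref{eq:g_eqtn} and \eqref{eq:inf_fund_optimal_hjb}, use the scale invariance of the stylised drift and volatility to reduce every term to a common monomial, match powers to fix the exponents, and solve the resulting linear equations for the constants. One small point in your favour: you correctly use distinct exponents $\xi_1=\alpha_1(\rho_1-1)/\rho_1$ and $\xi_2=\alpha_2(\rho_2-1)/\rho_2$ for the two funds, whereas the paper's displayed ansatz \eqref{eq:style_ansatz} writes a single $\xi$ for both, which is consistent only when the two funds share preferences.
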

The proof is a calculation using the ansatz \eqref{eq:style_ansatz}.

When the optimization problem is ill-posed, the supremum of the value function may be $-\infty$, $0$ or $\infty$. This explains why
we sometimes obtain complex solutions to the HJB equation. By evaluating the value-function for other strategies that fit the ansatz, but which do give real values, one can analyze the supremum of the value function in those cases where the supremum is not attained. See \cite{armstrong_buescu_dalby} for an analysis of this for a one-fund problem.

A fund should always be better off with insurance than without. That is, $V_I:=w_I^{\alpha} g_I>V_{NI}:=w_{NI}^{\alpha}g_{NI}$, where $V_I$ denotes the value function with insurance and $V_{NI}$ the value function without insurance i.e.\ the solution to the one-fund/infinite-fund problem \eqref{eq:analytical}. In order for the problem without insurance, to have the same value function as the problem with insurance (at time zero), we need to increase the initial wealth by a certain percentage relative to the associated problem with insurance. We will call this percentage the \emph{``insurance benefit''} of the finite-infinite fund problem. This percentage also defines the \emph{``maximum insurance benefit"} the finite fund could obtain from trading with the other fund in the case that it too is finite.

In \Cref{fig:differences_two_fund}, for the case of von Neumann--Morgenstern preferences, we plot the insurance benefit under our stylised model to begin to illustrate the benefit insurance contracts can provide. Given the stylised nature of our mortality model, the values in this plot are not particularly realistic. However, we expect this does provide an initial upper bound on the benefit one could expect to see when using more realistic mortality models. This suggest insurance will typically have less than a 10\% benefit. 

The stylised model also reveals some useful facts about the behaviour of the finite fund. Substituting the solution \eqref{eq:two_fund_analytical} into \eqref{eq:insurance_strat}, reveals the optimal insurance purchase rate is,
\begin{equation}
    q_1^{c*}=\frac{w}{\lambda(\alpha_1-1)}\left(\alpha_2-\alpha_1 + \frac{\alpha_1}{\rho_1} - \frac{\alpha_2}{\rho_2}\right).\label{eq:style_rate}
\end{equation}
Hence it is a decreasing function of the mortality rate. This expression also shows that with von Neumann--Morgenstern preferences, the finite fund buys systematic-mortality-risk insurance (i.e. the insurance purchase rate is positive) when it is less risk averse than the large fund, $\alpha_1>\alpha_2$, and sells systematic-mortality-risk insurance (i.e. the insurance purchase rate is negative) when it is more risk averse, $\alpha_1<\alpha_2$. The action under Epstein-Zin utility is not always clear, but for $\rho_1=\rho_2:=\rho$, we see the actions remain the same when $\rho<0$ and flip when $0<\rho<1$ i.e.\  insurance is bought when $\alpha_2>\alpha_1$ and sold for $\alpha_1>\alpha_2$.
%A useful fact we learn from this model, is that the small fund buys systematic-mortality-risk insurance (i.e. the insurance purchase rate is positive) when it is less risk averse than the large fund, $\alpha_1>\alpha_2$, and sells systematic-mortality-risk insurance (i.e. the insurance purchase rate is negative) when it is more risk averse, $\alpha_1<\alpha_2$. Substituting the solution \eqref{eq:}

\begin{figure}[ht]
    \centering
        \includegraphics[width=0.6\textwidth]{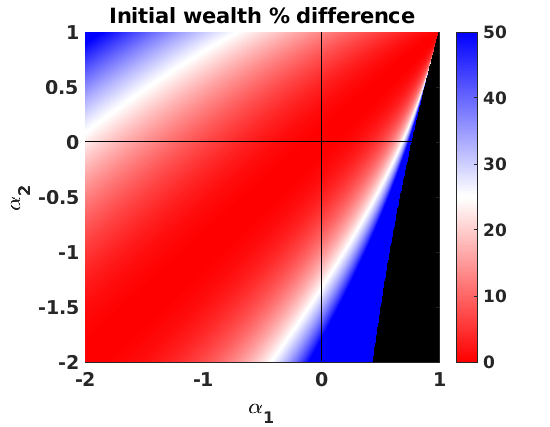}
        \caption{The insurance benefit
        under our stylised model when $a=4$, $b=1$ and $\mu=r=0$. This is the case of von Neumann--Morgenstern preferences where $\alpha_1$ denotes the preference of the finite fund and $\alpha_2$ the preference of the infinite fund. 
        Regions in black correspond to points where the analytic solution of the problem, or the associated one-fund problem (i.e.\ no insurance), is complex, so one or other of the problems is ill-posed.  The same shade of red is used whenever the maximal benefit is 10\% or less, and the same shade of blue is used whenever the maximal benefit is 50\% or greater.}
    \label{fig:differences_two_fund}
\end{figure}

\subsection{Cairns--Blake--Dowd mortality model}

We now consider a realistic mortality model to provide meaningful numbers on the differences in performance between a fund with and without insurance.
We consider a 1-factor continuous time analogue of the two-factor Cairns--Blake--Dowd model \cite{armstrong_buescu_dalby}, which leads to the following mortality rate equation:
\begin{multline}
    d\lambda=((e^{\lambda} - 1)(B_1 t(e^{\lambda} - 1) + B_1 t + B_2(B_3 t^2 +\\ B_4 t + 1) + B_5 e^{\lambda})/((e^{\lambda} - 1)^2 + 2 e^{\lambda} - 1)) dt \\
    + (B_6((e^\lambda - 1)^2 e^{-2\lambda} - (e^{\lambda} - 1)e^{-\lambda})(B_7(B_8t + 1)^2 + 1)^\frac{1}{2}e^\lambda) dW_t. \label{eq:CBD_lambda}  
\end{multline}
The constant coefficients $B_i$ can be found in \Cref{table:coefficients}. For completeness, details of the derivation of this SDE can be found in \Cref{sec:CBD_details}.
\begin{table}[ht]
\centering
\caption{Coefficients for the SDE \eqref{eq:CBD_lambda}.}
{\begin{tabular}{ll}
 \hline
     Coefficient & Value \\
    \hline
%     $B_1$ &  0.00118 \\
     $B_1$ &  0.00118 \\
%     $B_2$ &  0.00305572153927987\\
     $B_2$ &  0.00317 \\
%    $B_3$ &  $1.07865588232561\time 10^{-5}$ \\
    $B_3$ &  $1.04\times 10^{-5}$ \\
%    $B_4$ &  0.00140225264702329 \\
    $B_4$ &  0.00125 \\
%    $B_5$ &  1.04557321102826 \\
%    $B_6$ &  0.13746 \\
    $B_5$ &  0.0773 \\
%    $B_7$ &  0.0799372326495365 \\
    $B_6$ &  0.0782 \\
%    $B_8$ & $1.03164070286844\times 10^{-5}$ \\
    $B_7$ & 0.0393 \\
    $B_8$ &  0.0166 \\
    \hline
\end{tabular}}
\label{table:coefficients}
\end{table}

For this mortality model, the PDEs to solve are \eqref{eq:inf_fund_optimal_hjb} and \eqref{eq:g_eqtn}, with $\drift_\lambda$ and $\vol_\lambda$ replaced by the drift and volatility in \eqref{eq:CBD_lambda}. Full details of our numerical approach for solving these equations is given in \Cref{sec:numerical_approach}. Our market parameters are $r=0.027$, $\mu=0.062$, $\sigma=0.15$. From the work in \cite{armstrong_buescu_dalby}, we expect meaningful non-trivial solutions to exist when $\alpha$ and $\rho$ have the same sign only. In \cite{xing}, values of $\alpha\in(-7, -1)$ and $\rho\in(-4,1/2)$ are considered. The arguments in \cite{armstrong_buescu_dalby} show $\rho=-1$ can be a reasonable choice and $\alpha\leq\rho$ should hold, otherwise the preferences are risk-seeking in the satisfaction. With this in mind, for $\alpha,\rho<0$, when working with Epstein--Zin utility, we fix $\rho=-1$ and consider $\alpha\in(-10,-1)$. The situation when $\alpha>0$ is not typically considered in the literature, but it is a qualitatively distinct and interesting case, %corresponding to individuals with adequate pensions \cite{armstrong_buescu_dalby}, 
so we consider this too. For $\alpha,\rho>0$, when working with Epstein--Zin utility, we fix $\rho=1/3$. We also use power utility for some results.
%We consider power utility initially for ease of understanding.

\medskip

In \Cref{table:two_fund}, we summarise our findings on the insurance benefit for certain preference combinations of the small and large fund under the CBD model. The general trend the table reveals, is that the larger the difference in the two funds risk aversion, the greater the benefit insurance may provide. We also find that whether the finite fund buys or sells insurance in our CBD model, is consistent with the expression \eqref{eq:style_rate} for our stylised model. Hence the fund buys insurance above the zero diagonal of the table, with the exception of the preference combination $\alpha_1=1/4,\alpha_2=3/20,\rho_1=\rho_2=1/3$, where it sells, and sells insurance below the diagonal, with the exception of the preference combination $\alpha_1=3/20,\alpha_2=1/4,\rho_1=\rho_2=1/3$, where it buys.%We also find our comments about when the finite fund sells or buys insurance under our stylised model, remain true for the CBD model i.e., the fund buys insurance above the zero diagonal of the table and sells below it.

Provided the difference in the two funds risk aversion is not too large (i.e.\  $|\alpha_1-\alpha_2|\leq 3$ (excluding the case $\alpha_1=1/4,\rho_1=1/3, \alpha_2=-2,\rho_2=-1$)) the insurance benefit is less than 6\%. Since this also defines the maximal benefit that could be achieved if the two funds were finite, the true benefit will be less than this. So, in these situations, it may not be a worthwhile exercise for the two funds to exchange insurance given the additional complexity it adds to managing the fund for relatively small benefit. Otherwise, the benefit is typically less than 15\%, indicating these situations may benefit from insurance. However, there are clearly situations in the top right and bottom left of \Cref{table:two_fund} where insurance is likely to be beneficial. For instance, when $\alpha_1=1/4, \rho_1=1/3$ and $\alpha_2=-10,\rho=-1$, the maximal benefit comes out at 6196\%. Clearly in this case it would be valuable to now try to solve the full equilibrium problem to  obtain the true benefit in the finite case.

\begin{table}
\begin{tabularx}{\textwidth} { 
  | >{\raggedright\arraybackslash}X 
  | >{\centering\arraybackslash}X 
  | >{\centering\arraybackslash}X
  | >{\centering\arraybackslash}X
  | >{\centering\arraybackslash}X
  | >{\centering\arraybackslash}X
  | >{\raggedleft\arraybackslash}X | }
 \hline
   & $\alpha_1=-10,\newline \rho_1=-1$& $\alpha_1=-5,\newline \rho_1=-1$ & $\alpha_1=-3,\newline \rho_1=-1$ & $\alpha_1=-2,\newline \rho_1=-1$ & $\alpha_1=3/20, \newline\rho_1=1/3$& $\alpha_1=1/4,\newline \rho_1=1/3$\\
   \hline
 $\alpha_2=-10, \newline\rho_2=-1$ & 0\% & 7.76\% & 22.2\%  & 37.6\% & 623\% & 6196\%\\
 \hline
 $\alpha_2=-5, \newline\rho_2=-1$ & 4.96\% & 0\% & 1.93\% & 5.48\% & 47.4\% & 92\% \\
 \hline
 $\alpha_2=-3, \newline\rho_2=-1$ & 10.2\% & 1.41\% & 0\% & 0.62\% & 14\% &  22.7\% \\
 \hline
 $\alpha_2=-2, \newline\rho_2=-1$ & 13.6\% & 3.22\% & 0.5\% & 0\% & 5.72\% & 8.42\% \\
 \hline
 $\alpha_2=3/20$ \newline $\rho_2=1/3$ & 21.8\% & 8.89\% & 4.32\% & 2.29\% & 0\% & 0.065\% \\
\hline $\alpha_2=1/4$ \newline $\rho_2=1/3$ & 21.1\% & 8.32\%  & 3.87\%  & 1.93\% & 0.041\%   & 0\% \\
 \hline
\end{tabularx}
\caption{Insurance benefit for the finite-infinite fund problem under different preference combinations. %Percentage increase in initial wealth needed when insurance contracts are not purchased, to have equality of outcomes with the case where insurance is purchased. 
Here we consider Epstein--Zin utility. The first row defines the preferences of the small fund and the first column the preferences of the large fund, so that the value in a given cell is the result of these two funds trading. %We also include the action the small fund takes i.e., buying or selling insurance.
}
\label{table:two_fund}
\end{table}

%\begin{figure}
%      \centering
%      \begin{minipage}{0.32\textwidth}
%          \centering
%        \includegraphics[width=1.0\textwidth]{two_fund_pictures/mc_consumption_p_s_-3_p_l_-5_2.png}
%      \end{minipage}
%      \begin{minipage}{0.32\textwidth}
%          \centering
%        \includegraphics[width=1.0\textwidth]{two_fund_pictures/mc_purchase_p_s_-3_p_l_-5_2.png}
%      \end{minipage}
%      \begin{minipage}{0.32\textwidth}
%          \centering
%        \includegraphics[width=1.0\textwidth]{two_fund_pictures/mc_p_and_l_p_s_-3_p_l_-5_2.png}
%      \end{minipage}        
%    \caption{Consumption, insurance purchase and insurance profit and loss fans for $\alpha_1=\rho_1=-3$ and $\alpha_2=\rho_2=-5$.}
%    \label{fig:strats_p_s_-3_p_l_-5}
%\end{figure}

\begin{figure}[ht]
      \centering
      \begin{minipage}{0.45\textwidth}
          \centering
        \includegraphics[width=1.0\textwidth]{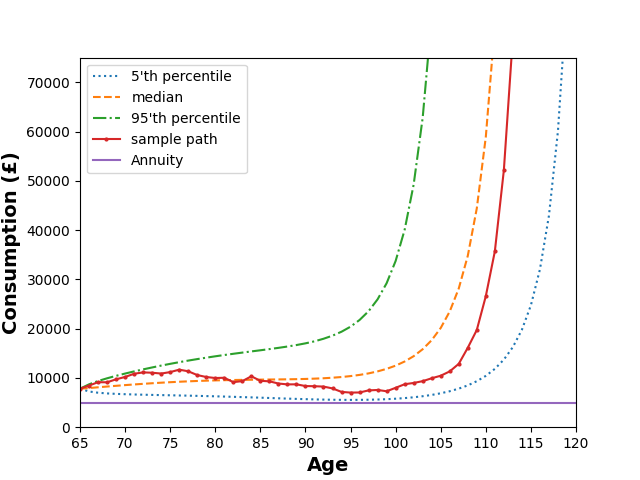}
      \end{minipage}
      \begin{minipage}{0.45\textwidth}
          \centering
        \includegraphics[width=1.0\textwidth]{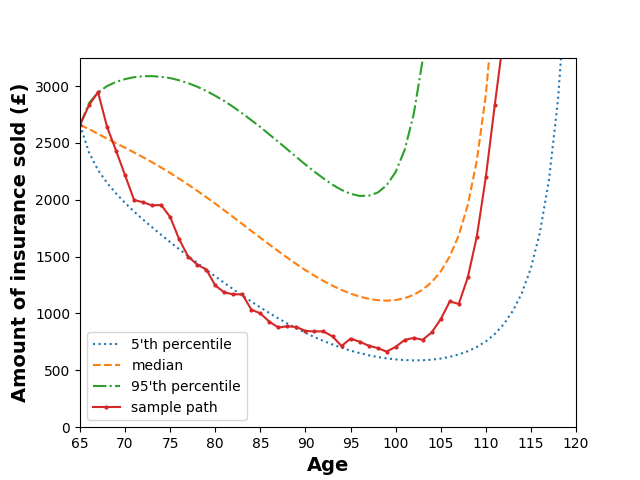}
      \end{minipage}
      \begin{minipage}{0.45\textwidth}
          \centering
        \includegraphics[width=1.0\textwidth]{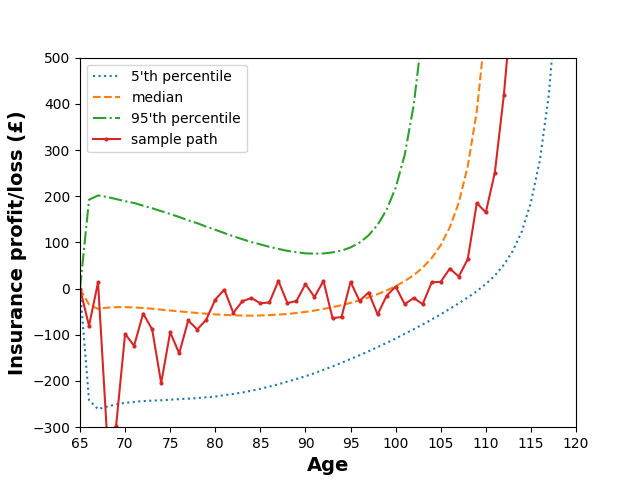}
      \end{minipage}
    \caption{Consumption, insurance purchase and insurance profit and loss fan diagrams (1,000,000 scenarios are used) for $\alpha_1=\rho_1=-3$ and $\alpha_2=\rho_2=-1$.}
    \label{fig:strats_p_s_-3_p_l_-1}
\end{figure}

\begin{figure}[ht]
      \centering
      \begin{minipage}{0.45\textwidth}
          \centering
        \includegraphics[width=1.0\textwidth]{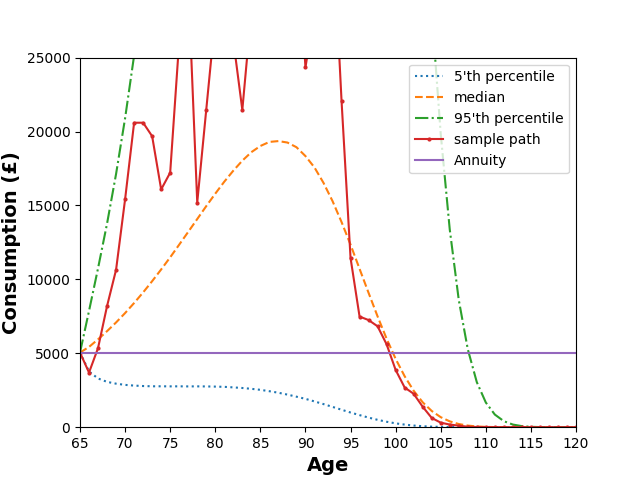}
      \end{minipage}
      \begin{minipage}{0.45\textwidth}
          \centering
        \includegraphics[width=1.0\textwidth]{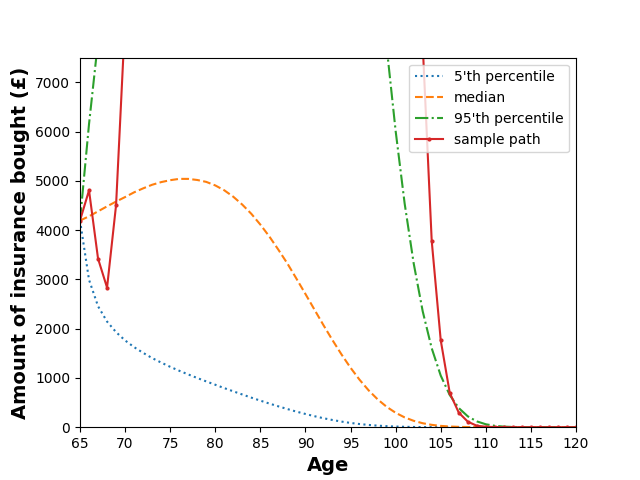}
      \end{minipage}
      \begin{minipage}{0.45\textwidth}
          \centering
        \includegraphics[width=1.0\textwidth]{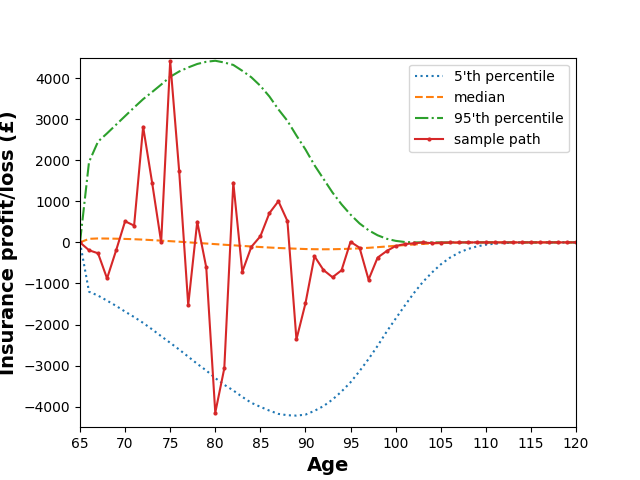}
      \end{minipage}
    \caption{Consumption, insurance purchase and insurance profit and loss fan diagrams (1,000,000 scenarios are used) for $\alpha_1=\rho_1=0.15$ and $\alpha_2=\rho_2=-1$.}
    \label{fig:strats_p_s_0.15_p_l_-1}
\end{figure}

\medskip
We now simulate some example pension outcomes for an individual with a pot size of £126,636, in an insured drawdown scheme with insurance contracts. This value is chosen to represent a median-earning female retiring in the UK in 2019. For comparison, in our results we include the annuity amount this pot could purchase, which under our CBD model is approximately £5000 per year.

In \Cref{fig:strats_p_s_-3_p_l_-1} and \Cref{fig:strats_p_s_0.15_p_l_-1}, we plot the consumption strategy, insurance purchase strategy and insurance profit and loss (each year), when the finite and infinite fund have power utility. We use power utility for computational ease. We now briefly describe the difference in consumption strategies for these funds when $\alpha>0$ and $\alpha<0$, in the absence of insurance contracts. See \cite{armstrong_buescu_dalby} for a detailed discussion. Because Epstein-Zin preferences (and power utility) are homogeneous of degree $\alpha$ with respect to wealth, they effectively target an infinite income when $\alpha<0$, whilst no income is admissible for $\alpha>0$. This leads to consumption strategies that put off consuming until approaching death when $\alpha<0$, so as to avoid running out of money, and strategies that are overly risk taking and consume early when $\alpha>0$, so as to avoid dying with leftover funds. We therefore interpret the case with $\alpha<0$ as modelling individuals/funds with inadequate pensions and $\alpha>0$ as modelling individuals/funds with adequate pensions. Consumption tends to infinity in later life for $\alpha<0$, while consumption is initially increasing and then decreases to zero when $\alpha>0$. This is a consequence of the homogeneity of Epstein--Zin preferences which prevents there being any form of hard or soft lower-bound on pension income.  \Cref{fig:strats_p_s_-3_p_l_-1} and \Cref{fig:strats_p_s_0.15_p_l_-1} show these comments remain true in the presence of insurance contracts.

In \Cref{fig:strats_p_s_-3_p_l_-1}, we plot the strategies and profit and loss for a finite fund with $\alpha_1=\rho_1=-3$, trading with an infinite fund with $\alpha_2=\rho_2=-1$. Solving the associated PDEs for this problem, the insurance benefit is 0.48\%. This is reflected in the average and median total consumption which improve slightly on the base scheme without insurance. We see even the 5-th percentile of scenarios outperforms an annuity in this case. %When $\alpha<0$, Epstein-Zin preferences and power utility effectively target and infinite income (since they are positive homogeneous of degree $\alpha$ with respect to wealth and utility decreases due to consumption), so put off consumption until approaching death to avoid running out of money. 
Moving onto the insurance strategy, the small fund sells insurance to the large fund as it is more risk averse. 
As with the stylised model, the proportion of wealth spent on insurance decreases as the mortality rate increases i.e.\  the factor multiplying wealth in \eqref{eq:insurance_strat}. This is why the amount of insurance purchased is highest at the start of retirement and decreases until approximately age 100, at which point the explosion in wealth due to delayed consumption means the amount purchased increases significantly. Looking at the median scenario of the profit-and-loss plot, following this approach leads to a net loss on insurance up to age 105 (roughly), at which point significant profits begin to be made. For risk-averse individuals with inadequate pensions, this makes sense, as they are sacrificing consumption earlier in life to be insured that should they live too long, they will be covered. Hence, the insurance contract protects against systematic longevity risk. If we make the infinite fund more risk averse than the finite fund, the finite fund buys insurance, but the picture remains qualitatively unchanged.

In \Cref{fig:strats_p_s_0.15_p_l_-1}, we repeat the same process, but with a finite fund with $\alpha_1=\rho_1=0.15$. The insurance benefit in this case is 0.8\%, so again, higher average and median total consumption is achieved as expected. The median scenario outperforms an annuity in terms of total consumption, however, it provides no consumption from age 110 onwards. As we have already discussed, this arises from the positive homogeneity of Epstein--Zin preferences and is a limitation of using these preferences. The finite fund is less risk averse than the infinite fund so it buys insurance.
The proportion of wealth spent on insurance still decreases as the mortality rate increases, but since wealth can increase significantly from large investment in the risky asset, we see a maximum can be achieved, before all scenarios decrease to zero when the fund runs out of money. %This is facilitated by the the large proportion of wealth spent on the risky asset. 
Interestingly, the median profit and loss scenario actually makes a loss, yet consumption is higher with insurance than without. This can be explained by the fact that insurance does yield a profit in the early stages of retirement (up to age 80 roughly), which means more wealth can be put into the risky asset, and the increased returns from this result in a net increase in utility over the funds lifetime. This approach seems consistent with a funds preference to consume as much as possible in early retirement when $\alpha>0$, to avoid dying with leftover money. If we make the infinite fund less risk averse than the finite fund, the finite fund now sells insurance, but the picture remains qualitatively unchanged.

%\begin{table}
%\begin{tabularx}{\textwidth} { 
%  | >{\raggedright\arraybackslash}X 
%  | >{\centering\arraybackslash}X 
%  | >{\centering\arraybackslash}X
%  | >{\centering\arraybackslash}X
%  | >{\centering\arraybackslash}X
%  | >{\centering\arraybackslash}X
%  | >{\raggedleft\arraybackslash}X | }
% \hline
%   & $\alpha_s=-5$ & $\alpha_s=-2$& $\alpha_s=-1/2$ & $\alpha_s=-1/3$ & $\alpha_s=1/3$ & $\alpha_s=1/2$\\
%   \hline
% $\alpha_l=-5$ & 0\% &  &   &  &   &  \\
% \hline   
% $\alpha_l=-2$ &  & 0\% &  &  &   &  \\
% \hline
% $\alpha_l=-1/2$ &  &  & 0\%  &   &   &   \\
% \hline
% $\alpha_l=-1/3$ &  &  &  & 0\% &  & \\
% \hline
% $\alpha_l=1/3$ &  &   &   &   & 0\% & 0.02\% \\
% \hline
% $\alpha_l=1/2$ &  &   &   &   &   & 0\%  \\
%\hline
%\end{tabularx}
%\caption{Percentage increase in initial wealth needed when insurance contracts are not purchased, to have equality of outcomes with the case where insurance is purchased. Here we consider power utility. The first row defines the preferences of the small fund and the first column the preferences of the large fund, so that the value in a given cell is the result of these two funds trading. We also include the action the small fund is taking i.e., buying or selling insurance.}
%\label{table:two_fund_VNM}
%\end{table}

%\section{Two funds with different mortality risk}
%\textcolor{red}{Do this if we need more content. Not sure if this works unless we consider the fully general equilibrium approach though as there is no no-trade price if moralities are different, so not amenable to a finite-infinite fund approach.}

\section{Conclusions}\label{sec:conclusions}

In this paper, we have shown how two collective funds can be operated to exchange mutual insurance contracts on systematic mortality risk in a fair and optimal manor. We have studied the specific case of an insured drawdown scheme characterised by a longevity credit system, however, the defining equation that facilitates the exchange of insurance, the market clearing condition, could be applied to any two collective funds. We consider the case where the two funds are exposed to the same mortality risk. In this situation one may intuitively expect there would be no reason for the two funds to trade insurance. However, we demonstrate this is not true and only a difference in the funds attitudes to risk is needed for beneficial insurance to be exchanged. It is likely that if the two funds face different mortality risk, the benefit of insurance will be greater and this is a question to address in future work.

We focus on a limiting case of the general equilibrium problem where one of the funds is finite and the other infinite. This is a far more tractable version of the fully general equilibrium problem for two finite funds, but still yields valuable insight as it defines the maximum benefit the two funds can achieve trading with each other.

Under a stylised mortality model, we solve the problem analytically. This confirms insurance contracts are always beneficial as would be expected (at least when a non-trivial solution is well defined). It reveals when a fund buys or sells insurance, and it provides a rough upper bound on how much benefit insurance can provide.
Using a more realistic Cairns--Blake--Dowd mortality model, we solve the problem numerically to show how much benefit insurance may give in practice. We see that the bigger the difference in the two funds risk aversion the greater the benefit insurance may provide. This is because there is a bigger gap between the two funds valuation of what the fair no trade price is and this can be exploited. When preferences are not too dissimilar, these prices are close and this means the maximum benefit is generally less than 6\%. As the true benefit for two finite funds is less than this, this suggests that the exchange of mutual insurance contracts is not a worthwhile exercise in such cases and the base insured-drawdown scheme is close to optimal.

In future work, we will generalise this approach of generating and defining insurance markets via market clearing conditions for other risk factors, such as wage growth risk. This will likely result in challenging systems of equations once more, but provided a no-trade price can be defined and the funds have different attitudes to risk, useful estimates on the maximum benefit insurance can provide may still be obtained.

\section*{Acknowledgements}

This research is funded by Nuffield grant FR-000024058.

\bibliographystyle{plain}
\bibliography{collectivization}

\appendix

\section{Justification for continuous-time aggregator}
\label{sec:ezAggregatorMotivation}

If an individual's time of death is independent of the systematic factors, their discrete-time Epstein-Zin utility with mortality
satisfies
\[
Z_{t}^\rho=c_t^\rho + e^{-(\delta + \frac{\rho}{\alpha} \lambda_t) \delta t} \E_{\P}( Z_{t+\delta t}^\alpha  \mid {\cal F}_t)^{\frac{\rho}{\alpha}}. 
\]
where we have introduced a discounting rate $\delta$ so that $\beta=e^{-\delta \, \delta t}$.
Rearranging we find
\[
\E_{\P}( Z_{t+\delta t}^\alpha  \mid {\cal F}_t) = \left[
\frac{Z_{t}^\rho- c_t^\rho}
{e^{-(\delta + \frac{\rho}{\alpha} \lambda_t) \delta t}}
\right]^\frac{\alpha}{\rho}. 
\]
We define $V$ by requiring $\alpha V=Z^\alpha$. The coefficient $\alpha$ is there to ensure that the transform $Z \to V$ is monotone so that $V$ is a gain
function that defines identical preferences to $Z$. So
\[
\E_\P( V_{t+\delta t} \mid {\cal F}_t )
= \frac{1}{\alpha}
\left[
\frac{
(\alpha V)^{\frac{\rho}{\alpha}} - \delta t \, c^\rho
}
{e^{-(\delta + \lambda \frac{\rho}{\alpha})\Delta t }}
\right]^{\frac{\alpha}{\rho}}
\]
Proceeding formally, we use l'H\^opital's rule to find:
\begin{align}
\frac{\ed}{\ed s} \E_\P( V_{s} \mid {\cal F}_t )
\Big|_{s=t}
&=\lim_{\delta t \to 0} \frac{\ed}{\ed (\delta t) }
\left[
\frac{1}{\alpha}
\left[
\frac{
(\alpha V)^{\frac{\rho}{\alpha}} - \delta t \, c^\rho
}
{e^{-(\delta + \lambda_t \frac{\rho}{\alpha})\delta t }}
\right]^{\frac{\alpha}{\rho}}
\right] \nonumber \\
&=\frac{1}{\alpha} \frac{\alpha}{\rho} (- c^\rho)((\alpha V)^\frac{\rho}{\alpha} )^{\frac{\alpha}{\rho}-1}
+ \left( \frac{\alpha}{\rho} \delta + \lambda_t \right) V
\nonumber \\
&=-\frac{1}{\rho}c^\rho \left( \alpha V \right)^{1-\frac{\rho}{\alpha}} + \left( \frac{\alpha}{\rho} \delta + \lambda_t \right) V.
\label{eq:ezAggregatorMotivation}
\end{align}
This then motivates the definition for the Epstein--Zin aggregator with mortality as a solution to the BSDE \eqref{eq:ezBSDE}
will satisfy equation \eqref{eq:ezAggregatorMotivation}.

\section{Continuous-time CBD mortality model}\label{sec:CBD_details}

We consider a continuous time version of the two-factor Cairns--Blake--Dowd model, for which the underlying mortality effects are described by:
\begin{align}
    dA := d(A_1,A_2)= \mu dt + C dW_t,
\end{align}
where 
\begin{align*}
    \mu&:=(\mu_1,\mu_2) = (-0.00669, 0.000590),\\ 
    V&=CC^T=\begin{pmatrix}
            0.00611 & -0.0000939\\ 
            -0.0000939 & 0.000001509
            \end{pmatrix},\\
    W_t&:=(W^1_t,W^2_t),
\end{align*}
for independent Brownian motions $W^1_t,W_t^2$
The parameter values used come from \cite{CBD_article} equation (4).
C is the upper triangular matrix from the Cholesky decomposition of V i.e.\  
\begin{equation}
    C=
    \begin{pmatrix}
%            0.07816649 & -0.00120128\\ 
%            0 & 0.00025675
            0.0782 & -0.00120\\ 
            0 & 0.000257            
            \end{pmatrix}.\label{eq:vol_matrix}
\end{equation}
The process $A_1$ captures general improvements in mortality over time at all ages (and therefore trends downwards with time), while $A_2$ captures the fact mortality  improvements have been greater at lower ages (and therefore trends upwards with time) \cite{CBD_article}. 

We next define $p(x_0,t)$ to be the survival probability at time $t$, for a cohort of age $x_0$ at time $t=0$, and $$q(x,t)=1-p(x,t)=\frac{e^{A_1 +A_2 (x_0+t)}}{1+e^{A_1 +A_2 (x_0+t)}}$$ to be the probability of death at time $t$. We assume $q(x_0,t)=1-e^{-\lambda(x_0,t)}$, with $\lambda(x_0,t)$ essentially being a time dependent rate parameter for an exponential distribution. We therefore interpret $\lambda(x_0,t)$ as our mortality rate and have 
\begin{equation}  
\lambda(x_0,t) = -\text{log}\left(1-\frac{e^{A_1 +A_2 (x_0+t)}}{1+e^{A_1 +A_2 (x_0+t)}}\right).\label{eq:rate_def}
\end{equation}

We would next like to derive the SDE for $\lambda$, which is done by an application of It\^{o}'s Lemma. In its current form, we obtain an SDE with a stochastic drift due to the $A_2 (x_0+t)$ term. This increases the dimension of the HJB equation from 2 to 3. For simplicity, we make the following modification. Note that the SDE for $A_1+A_2(x_0+t)$, is given by
\begin{align}
    d(A_1&+A_2(x_0+t)) \nonumber\\
    &= dA_1 + dA_2 (x_0+t) + A_2 dt\nonumber\\
    &= (\mu_1 +\mu_2(x_0+t) +A_2)dt+C_{1,1}dW^1_t+C_{1,2}dW^2_t+(x_0+t)C_{2,2}dW^2_t\nonumber\\
    &=(\mu_1 +\mu_2(x_0+t) +A_2)dt+\left(C_{1,1}^2+(C_{1,2}+(x_0+t)C_{2,2})^2\right)^\frac{1}{2} d\Tilde{W}_t,\label{eq:underlying_SDE}
\end{align}
where $C_{i,j}$ denotes the $i,j$'th component of \eqref{eq:vol_matrix} and $\Tilde{W}$ is a new independent Brownian motion. To obtain a tractable problem, we define a new variable $x:=A_1+A_2(x_0+t)$ in \eqref{eq:rate_def}, which satisfies the SDE
\begin{equation}
    dx = (\mu_1 +\mu_2(x_0+t) +A_{2,0}+\mu_2 t)dt+\left(C_{1,1}^2+(C_{1,2}+(x_0+t)C_{2,2})^2\right)^\frac{1}{2} d\Tilde{W}_t,\label{eq:simple}
\end{equation}
i.e.\ \eqref{eq:underlying_SDE} with $A_2$ replaced by the solution of the ODE $dA_{2}=\mu_2 dt$. This then yields a HJB equation dependent on $t$ and $\lambda$ only. Looking at Figure 1 in \cite{CBD_article}, we choose $A_2^0=0.1058$.

\section{Numerical approach for solving the CBD problem}\label{sec:numerical_approach}

To solve \eqref{eq:inf_fund_optimal_hjb} for the CBD model, we first need to solve \eqref{eq:g_eqtn} to obtain the value function of the large fund, since this dictates the price of insurance contracts. 
We use the Crank--Nicolson scheme, along with an improved Euler step to generate the first estimate of the solution at each time step.
We set $x_0=65$ to be the starting age of our simulations. $\lambda=0.01$ corresponds to a survival probability of 0.99 at retirement age, while $\lambda=10$ yields a survival probability of $4.5\times 10^{-5}$, so that all individuals are almost surely dead at this point. We are therefore interested in the solution for $\lambda\in[0.01,10]$.

The value of the solution at the boundaries is not obvious a priori. As such, we enforce Neumann boundary conditions at the ends of our domain and extend our computational domain to $\lambda\in[0.001,20]$, to minimise the impact of the boundary conditions on the solution for $\lambda\in[0.01,10]$. We expect $g_2\to\infty$ as $\lambda\to0$ and $g\to0$ as $\lambda\to\infty$. Because of this expected behaviour, we take advantage of two transformations: (i) we use the change of variables $L=\textrm{log}(\lambda)$ and (ii) solve the HJB equation for $\textrm{log}(g_2)$. We first solve \eqref{eq:g_eqtn} with $\vol_\lambda=0$, as we only require one boundary condition plus the payoff. Using the stylised model as a guide, we expect $g_2\approx \lambda^{\frac{\alpha_2 (\rho_2-1)}{\rho_2}}$ to be a rough approximation to the solution. Taking into account our transformations, we therefore enforce $$\frac{\partial}{\partial L}  (\textrm{log} (g_2))= \frac{\alpha_2(\rho_2-1)}{\rho_2}.$$ With the risk-less solution $g_{2,det}$ obtained, we solve the full HJB equation with $$\frac{\partial}{\partial L}  (\textrm{log} (g_2(20)))=\frac{\partial}{\partial L}  (\textrm{log} (g_{2,det}(20)))$$ and $$\frac{\partial}{\partial L}  (\textrm{log} (g_2(0.001)))=\frac{\partial}{\partial L}  (\textrm{log} (g_{2,det}(0.001))),$$ where the derivatives are approximated using backward and forward difference approximations respectively.

Again, informed by \eqref{eq:analytical} from the stylised model, we take our payoff at the final time $T_f$, to be $$g_{2,T_f}=\lambda^{\frac{\alpha_1 (\rho_1-1)}{\rho_1}},$$
which should be a rough approximation to the correct payoff. Since this is an approximation, we would like to minimise its impact on the solution. We therefore run our numerical scheme back from time $T_f=150$ (which corresponds to age 215). Our computational domain for $\lambda$ extends to $\lambda=20$, and in a deterministic setting (i.e.\ the volatility is zero in \eqref{eq:CBD_lambda}) the simulated value of $\lambda$ is approximately 20 at age 215, hence the choice. This time is far longer than is required for all individuals to die (our simulations suggest everyone will be dead by age 120) and therefore achieves our aim of minimising the payoff's impact.

With the solution of the infinite fund problem inputted to \eqref{eq:inf_fund_optimal_hjb}, the approach for solving this equation is similar to that just described. We again use the Crank-Nicholson scheme, solve the problem for $\lambda\in[0.001,20]$ with Neumann boundary conditions and compute the solution backwards in time from $T_f=140$, with payoff
\begin{equation}
    g_{1,Tf}=\lambda^{\frac{\alpha_1(\rho_1-1)}{\rho_1}}.
\end{equation}
The difference is how we define the Neumann boundary conditions. To do this, we now solve the one-fund problem \eqref{eq:g_eqtn}, but for the finite fund (i.e.\ replacing $(\alpha_2,\rho_2)$ with $(\alpha_1,\rho_1)$). With this solution obtained, we approximate the derivative at $\lambda=0.001$ using a forward approximation and the derivative at $\lambda=20$, using a backward approximation, these approximations are the Neumann boundary conditions we impose to solve \eqref{eq:inf_fund_optimal_hjb}.

We would also like to generate typical wealth and consumption streams for a fund that behaves optimally. This can be done as follows. Having obtained the numerical solution $g_{num}(\lambda)$ to \eqref{eq:g_eqtn}, it is not difficult to find constants $A,B$ such that $g_{num}\approx A \lambda^B$. This can then be used to approximate $c^*$ and $q^{c*}$ in \eqref{eq:finite_strats}. Along with 
the investment proportion from the Merton problem, this can be feed into \eqref{eq:wealth}. We can then simulate this approximation to \eqref{eq:wealth} using the Euler-Maruyama method \cite{intro_book}.

\end{document}